\newtheorem{lemma}{Lemma}
\newtheorem{theorem}{Theorem}
\newtheorem{proposition}{Proposition}
\newtheorem{corollary}{Corollary}
\title{Defining binary phylogenetic trees using  parsimony}
\author{Mareike Fischer}
\address{Institute of Mathematics and Computer Science, Greifswald University, Greifswald, Germany} \email{email@mareikefischer.de}
\begin{document}
\maketitle

\begin{abstract} 
Phylogenetic (i.e. leaf-labeled) trees play a fundamental role in evolutionary research. A typical problem is to reconstruct such trees from data like DNA alignments (whose columns are often referred to as characters), and a simple optimization criterion for such reconstructions is maximum parsimony. It is generally assumed that this criterion works well for data in which state changes are rare. In the present manuscript, we prove that each phylogenetic tree $T$ with $n\geq 20 k$ leaves is uniquely defined by the set $A_k(T)$, which consists of  all characters with parsimony score $k$ on $T$. This can be considered as a promising first step towards showing that maximum parsimony as a tree reconstruction criterion is justified when the number of changes in the data is relatively small.
\smallskip \newline
\noindent \textbf{Keywords.} phylogenetic tree, maximum parsimony, Buneman theorem, $X$-splits
\end{abstract}

\section{Introduction}
Mathematical phylogenetics is concerned with reconstructing the evolutionary (leaf-labeled) trees of species based on data. Typically, the data comes in the form of an alignment (e.g. aligned DNA, RNA or proteins or aligned binary
sequences like absence or presence of certain morphological characteristics), whose columns are also often referred to as characters. While no tree
reconstruction method can guarantee to recover the true tree for all data sets,
it has long been known that in some special cases a tree is uniquely defined by certain alignments. One such example is due to the classic theorem by Buneman \cite{Buneman1971}: If one takes a phylogenetic tree $T$ and regard all its edges as bipartitions of the species set labeling the leaves, one can summarize the resulting binary characters in a set, which we refer to as alignment $A_1(T)$. A re-formulated version of Buneman's theorem states that $A_1(T)$ then uniquely defines $T$ \cite{Fischer2019}. Note that $A_1(T)$ consists of all binary characters which induce precisely one change on $T$ (namely on the edge which they correspond to). 

A natural question arising from this scenario is if $A_k(T)$ defines $T$ for any value $k\geq 1$, i.e. when $k$ changes are allowed rather than just one. It has recently been shown that  $A_2(T)$ indeed defines $T$, but that unfortunately, for all $k\geq 3$, $A_k(T)$ does \emph{not} define $T$ whenever $T$ has $2k$ leaves. In the present note, we will prove the positive result that  $A_k(T)$ always defines $T$ whenever  $n \geq 20k$, where $n$ denotes the number of leaves of $T$.

\section{Preliminaries}
\subsection{Definitions and basic concepts}

We start with some notation. Recall that a \emph{phylogenetic tree} $T=(V,E)$ on a species set $X=\{1, \ldots, n\}$, is a connected acyclic graph with vertex set $V$ and edge set $E$ whose leaves are bijectively labeled by $X$. Such a tree $T$ is also often referred to as phylogenetic $X$-tree. It is called \emph{binary} if all its inner nodes have degree 3. Note that we consider two phylogenetic $X$-trees $T=(V,E)$ and $\widetilde{T}=(\widetilde{V},\widetilde{E})$ to be equal, denoted $T=\widetilde{T}$, if there exists a map $f:V\rightarrow \widetilde{V}$  such that $e=\{u,v\} \in E \Longleftrightarrow \{f(u),f(v)\} \in \widetilde{E}$ and with the additional property that $f(x)=x$ for all $x \in X$. In other words, $f$ is a graph isomorphism which preserves the leaf labelling.

Throughout this manuscript,  when we refer to a tree $T$, we always mean a binary phylogenetic $X$-tree. However, for technical reasons we sometimes also have to consider rooted trees: When an edge is removed from a binary (unrooted) tree, two subtrees remain, both of which have precisely one node of degree 2. This node is considered as the {\em root} of the respective subtree. The two subtrees adjacent to the root of a tree are called \emph{maximum pending subtrees} of that tree, cf. Figure \ref{treedecomp}. 

In the present manuscript, whenever we refer to distances between vertices in a tree $T$, we mean the length of the unique path connecting these vertices in $T$. Moreover, we say that two leaves $v$ and $w$ form a \emph{cherry} $[v,w]$, if $v$ and $w$ are adjacent to the same inner node $u$ of $T$.

Furthermore, recall that a bipartition $\sigma$ of $X$ into two non-empty disjoint subsets $A$ and $B$ is often called an \emph{$X$-split}, and is denoted by $\sigma=A|B$. Recall that there is a  natural relationship between $X$-splits and the edges of a phylogenetic $X$-tree $T$, because the removal of an edge $e$ induces a bipartition $\sigma_e$ of $X$. In the following, the set of all such induced $X$-splits of $T$ will be denoted by $\Sigma(T)$. Recall that for a binary phylogenetic $X$-tree $T$ with $|X|=n$ we have $|\Sigma(T)|=2n-3$ \cite[Prop. 2.1.3]{Semple2003}. Moreover, following \cite{Fischer2021}, the {\em size of an $X$-split $\sigma=A|B$} is defined as $|\sigma|=\min\{|A|,|B|\}$. Given a set of $X$-splits, an element of this set with minimal size is called a {\em minimal split}. 

Moreover, recall that a character $f$ is a function from the taxon set $X$ to a set $\mathcal{C}$ of character states, i.e. $f: X \rightarrow \mathcal{C}$. Note that a finite sequence of characters is also often referred to as \emph{alignment} in biology. While in most biological cases, the order of the characters in an alignment plays an important role, for our purpose it suffices to simply define an alignment as a multiset of characters. In this manuscript, we will only be concerned with \emph{binary characters} (and thus also \emph{binary alignments)}, i.e. without loss of generality $\mathcal{C}= \{a,b\}$. 

There is a close relationship between $X$-splits and binary characters, because every $X$-split can be represented by a binary character by assigning the same state to taxa in the same subset. Throughout this manuscript, we assume for technical reasons and without loss of generality that $f(1)=a$.
If an $X$-split $\sigma_e$ is induced by an edge $e$ of a phylogenetic $X$-tree in the manner explained above, we also say that the corresponding binary character is induced by $e$. 

\par\vspace{0.5cm}
Thus, a binary character $f: X \rightarrow \{a,b\}$ assigns to each leaf of the tree a corresponding state. Now, an \emph{extension} of such a character $f$ on a tree $T$ with vertex set $V$ is a map $g: V \rightarrow \{a,b\}$ such that $g(x)=f(x)$ for all $x \in X$. Moreover, we call $ch(g) = \vert \{ \{u,v\} \in E, \, g(u) \neq g(v)\} \vert$ the \emph{changing number} of $g$ on $T$.

Another concept we need for the present manuscript is the so-called \emph{parsimony score} $l(f,T)$ of a character $f$ on a tree $T$. Here, $l(f,T) = \min\limits_{g} ch(g,T)$, where the minimum runs over all extensions $g$ of $f$ on $T$. The parsimony score of an alignment $A=\{f_1,\ldots, f_m\}$ of characters is then defined as: $l(A,T)=\sum\limits_{i=1}^m l(f_i,T)$. 

Last, for a given tree $T$, we define $A_k(T)$ to be the set consisting of all binary characters $f$ with $l(f,T)=k$. Following \cite{Fischer2019}, we also refer to $A_k(T)$ as the \emph{alignment induced by $T$ and $k$}.

\subsection{Known results}

A basic result that we need throughout this manuscript is the following theorem, which counts the number of characters in $A_k(T)$.

\begin{theorem} \label{thm:lengthAk} \cite{Book_Steel}
Let $T$ be a binary phylogenetic $X$-tree with $|X|=n$. Then, we have:
$$|A_k(T)|=\frac{2n-3k}{k}\binom{n-k-1}{ k-1} \cdot 2^{k-1}.$$
\end{theorem}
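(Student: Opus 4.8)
The plan is to prove the formula by induction on $n$ by setting up a recursion obtained from contracting a cherry of $T$; as a by‑product this shows that $|A_k(T)|$ depends only on $n$ and $k$ and not on the shape of $T$. Write $a_k(n)$ for the number of binary characters $f$ with $f(1)=a$ and $l(f,T)=k$ on a binary phylogenetic tree with $n$ leaves; by induction we may assume that this quantity is well defined (tree‑independent) for trees with fewer than $n$ leaves and that it equals the claimed closed form there. Fix a cherry $[x,y]$ of $T$ with common neighbour $u$ and with $\{x,y\}\neq\{1\}$; such a cherry exists for $n\geq 3$ (for $n\geq 4$ there are at least two cherries, and for $n=3$ any pair not containing $1$ works). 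We now split the characters $f$ on $T$ according to whether $f(x)=f(y)$ or $f(x)\neq f(y)$ and count each class in terms of smaller trees.

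Suppose first that $f(x)=f(y)=s$. Let $\tilde T$ be the tree on $n-1$ leaves obtained from $T$ by replacing the cherry $[x,y]$ (the vertex $u$ together with its two pendant edges) by a single leaf $z$, and define a character $\tilde f$ on $\tilde T$ by $\tilde f(z)=s$ and $\tilde f=f$ on all other leaves. Assigning $g(u)=s$ shows $l(f,T)\le l(\tilde f,\tilde T)$; conversely, restricting any extension of $f$ to the vertices of $\tilde T$ and, if necessary, re‑setting the state at $z$ to $s$ (which alters the changing number by at most one while removing the two cherry edges, whose combined contribution is then $2$) produces an extension of $\tilde f$ with at most $l(f,T)$ changes. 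Hence $l(f,T)=l(\tilde f,\tilde T)$, and $f\mapsto\tilde f$ is a bijection from the characters in this class onto all characters of $\tilde T$, with the normalisation now carried by leaf $1$ (still a leaf of $\tilde T$). So this class contributes exactly $a_k(n-1)$ characters of parsimony score $k$.

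Now suppose $f(x)\neq f(y)$, so $\{f(x),f(y)\}=\{a,b\}$. Then in \emph{every} extension $g$ of $f$ the value $g(u)$ equals one of these two states, so exactly one of the two cherry edges is a changing edge of $g$. Deleting the leaves $x,y$ and suppressing the resulting degree‑two vertex yields a binary tree $T^{-}$ on $n-2$ leaves; writing $h^{-}$ for the restriction of $f$ to $X\setminus\{x,y\}$, one checks that $l(f,T)=1+l(h^{-},T^{-})$, the ``$+1$'' being the forced cherry change and the passage from ``$u$ with a free state, minimised'' to ``$u$ deleted'' being the standard fact that leaving a leaf unlabelled and minimising over its state equals deleting that leaf. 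Since $f$ is determined by the pair $\big(h^{-},(f(x),f(y))\big)$ with $(f(x),f(y))\in\{(a,b),(b,a)\}$, and since $f(1)=a$ translates into $h^{-}(1)=a$ (as $1\notin\{x,y\}$), this class contributes $2\,a_{k-1}(n-2)$ characters of parsimony score $k$. Combining the two classes gives
$$a_k(n)=a_k(n-1)+2\,a_{k-1}(n-2)\qquad(n\ge 3,\ k\ge 1),$$
together with the base data $a_0(n)=1$ and the trivial values of $a_k(1),a_k(2)$; equivalently the parsimony generating polynomial $G_n(q)=\sum_f q^{\,l(f,T)}$ obeys $G_n(q)=G_{n-1}(q)+2q\,G_{n-2}(q)$. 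Because the right‑hand side is tree‑independent by the inductive hypothesis, so is $a_k(n)$. It remains to verify that the claimed closed form satisfies this recursion and the base cases: substituting $m=n-k$ and using Pascal's rule, the recursion reduces to the elementary identity $(k-1)\binom{m-2}{k-1}=(m-k)\binom{m-2}{k-2}$, which is immediate from the factorial formulas, completing the induction.

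The step I expect to be the main obstacle is the case $f(x)\neq f(y)$: one has to argue carefully that freeing the state of the merged internal vertex $u$ and minimising is the same as deleting the corresponding leaf, and that precisely one cherry edge is forced to change — this is exactly where the binarity of both the tree and the character is used — while simultaneously keeping track of the normalisation $f(1)=a$ and of the degenerate small‑$n$ trees so that the recursion and its base cases are stated correctly.
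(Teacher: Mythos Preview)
The paper does not prove this theorem at all: it is quoted in the ``Known results'' subsection with a citation to \cite{Book_Steel} and is used only as a black box (specifically, in the proof of Lemma~\ref{pathsexistence}). So there is no ``paper's own proof'' to compare against.

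That said, your argument is correct and is essentially the standard route to this identity. The cherry-reduction recursion
\[
a_k(n)=a_k(n-1)+2\,a_{k-1}(n-2)
\]
simultaneously establishes that $|A_k(T)|$ is shape-independent and pins down the closed form; your two cases are handled properly (the equal-states case is immediate, and in the unequal-states case the key point---that exactly one cherry edge is forced to change and that optimising over the freed state at $u$ is equivalent to pruning $u$---is exactly right for binary characters on a binary tree), and the binomial identity $(k-1)\binom{m-2}{k-1}=(m-k)\binom{m-2}{k-2}$ does close the induction. Two cosmetic remarks: by ``$\{x,y\}\neq\{1\}$'' you clearly intend $1\notin\{x,y\}$; and when you ``delete $x,y$ and suppress the resulting degree-two vertex'', strictly speaking $u$ first becomes an unlabelled degree-one vertex which must itself be removed before its neighbour drops to degree two---but this does not affect the parsimony identity $l(f,T)=1+l(h^{-},T^{-})$, which holds as you state it.
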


Another classic result that will play a fundamental role here is Menger's theorem, which in the context of phylogenetics leads to the following proposition \cite[Lemma 5.1.7 and Corollary 5.1.8]{Semple2003}:

\begin{proposition}\label{menger} \cite[adapted from Corollary 5.1.8]{Semple2003} Let $f$ be a binary character on $X$ employing states from $\mathcal{C}=\{a,b\}$, and let $T$ be a binary phylogenetic $X$-tree. Then $l(f,T)$ is equal to the maximum number of edge-disjoint leaf-to-leaf paths of $T$, where each path connects one leaf in state $a$ with one leaf in state $b$.
\end{proposition}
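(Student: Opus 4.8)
The plan is to interpret $l(f,T)$ as a minimum edge cut and then invoke the edge-disjoint version of Menger's theorem. Write $A=\{x\in X: f(x)=a\}$ and $B=\{x\in X: f(x)=b\}$; if one of them is empty then $l(f,T)=0$ and there are no $a$-to-$b$ paths, so the statement is trivial, and we may assume $A,B\neq\emptyset$. Call a set $F\subseteq E$ \emph{admissible} if no connected component of $(V,E\setminus F)$ contains both a leaf of $A$ and a leaf of $B$ — equivalently, every path in $T$ from a leaf of $A$ to a leaf of $B$ uses at least one edge of $F$. The first step is to prove
$$l(f,T)=\min\{\,|F|: F\subseteq E \text{ admissible}\,\}.$$
For ``$\leq$'': given an admissible $F$, colour every component of $(V,E\setminus F)$ meeting $A$ with $a$, every component meeting $B$ with $b$, and the remaining (leafless) components arbitrarily; this yields a well-defined extension $g$ of $f$ all of whose changing edges lie in $F$, so $l(f,T)\leq ch(g,T)\leq|F|$. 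For ``$\geq$'': take an optimal extension $g^\ast$ and let $F_{g^\ast}$ be its set of changing edges; then $|F_{g^\ast}|=ch(g^\ast,T)=l(f,T)$, and $F_{g^\ast}$ is admissible since each component of $(V,E\setminus F_{g^\ast})$ is monochromatic under $g^\ast$ and hence cannot contain leaves of both states.

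The second step applies Menger's theorem in its edge-disjoint form for two vertex sets. Form an auxiliary multigraph $T'$ from $T$ by adding two new vertices $s,t$, joining $s$ to every leaf of $A$ and $t$ to every leaf of $B$, each new edge taken with multiplicity $|E|+1$. Since $F=E$ is itself admissible, the minimum admissible size is at most $|E|$, so a minimum $s$--$t$ edge cut of $T'$ never contains a new edge and therefore coincides with a minimum-size admissible set; by Menger's theorem its cardinality equals the maximum number of pairwise edge-disjoint $s$--$t$ paths in $T'$. Deleting the first and last (new) edge of each such path turns it into a path in $T$ from a leaf of $A$ to a leaf of $B$, and conversely; moreover, since every leaf of $T$ has degree $1$, pairwise edge-disjoint such paths automatically have pairwise distinct endpoints, matching the wording of the proposition. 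Combining this with the identity from the first step gives the claim. (Alternatively, the statement is essentially \cite[Corollary 5.1.8]{Semple2003} specialised to the two-state character $f$, and one could simply cite it.)

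The only genuinely delicate point is the reduction to Menger's theorem for vertex sets: one must ensure the minimum cut in the auxiliary graph uses only original edges of $T$ (this is what the large multiplicities are for) and that edge-disjointness in $T'$ corresponds faithfully to edge-disjoint leaf-to-leaf $a$-to-$b$ paths in $T$ (handled by the degree-$1$ observation). Everything else is the routine dictionary between extensions of $f$, edge cuts separating $A$ from $B$, and flows.
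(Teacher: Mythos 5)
Your proof is correct. Note, however, that the paper does not prove this proposition at all: it is stated as a known result, cited as an adaptation of \cite[Corollary 5.1.8]{Semple2003}, so there is no internal argument to compare against — what you have done is supply the standard Menger-type argument that underlies that citation. Both halves of your argument are sound: the identification $l(f,T)=\min\{|F|:F\ \text{admissible}\}$ is exactly the usual dictionary between extensions of $f$ and edge sets separating the $a$-leaves from the $b$-leaves (monochromatic components in one direction, colouring components in the other), and the reduction to the edge version of Menger's theorem via the super-source/super-sink construction is handled carefully at precisely the two points where care is needed: the multiplicity $|E|+1$ on the new edges guarantees a minimum $s$--$t$ cut consists only of original edges of $T$ (and such cuts are exactly the admissible sets), and the degree-$1$ property of leaves ensures that edge-disjoint $s$--$t$ paths in the auxiliary graph correspond bijectively, after trimming the first and last edge, to edge-disjoint leaf-to-leaf $a$-to-$b$ paths in $T$ with distinct endpoints. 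The degenerate case where one state class is empty is also dispatched correctly. In short, your write-up makes explicit and self-contained what the paper delegates to \cite{Semple2003}.
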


The next two statements are the classic theorem by Buneman and a direct consequence from it concerning $A_1(k)$.

\begin{theorem}[Buneman theorem \cite{Buneman1971,Semple2003}]\label{buneman} 
 Let $T$ and  $\widetilde{T}$  be two binary phylogenetic $X$-trees. Then,
$T=\widetilde{T}$  if and only if $\Sigma(T)=\Sigma(\widetilde{T})$. 
\end{theorem}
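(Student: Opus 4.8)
The equivalence has a trivial direction: if $T=\widetilde T$ then of course $\Sigma(T)=\Sigma(\widetilde T)$, so the whole content is the converse. The plan is to prove it by induction on $n=|X|$ using repeated cherry deletion. For the base case $n=3$ (the smallest $n$ for which a binary phylogenetic $X$-tree has an inner vertex) there is exactly one binary phylogenetic $X$-tree, so the claim is immediate; the reduction below decreases $n$ by one and never drops below $3$.

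For the inductive step, assume $\Sigma(T)=\Sigma(\widetilde T)$ with $n\ge 4$. First I would recall the standard fact that every binary phylogenetic tree on at least three leaves has a cherry, and fix a cherry $[a,b]$ of $T$. The key observation is that a two-element set $\{a,b\}\subseteq X$ forms a cherry of a binary phylogenetic $X$-tree if and only if the split $\{a,b\}|X\setminus\{a,b\}$ belongs to its split set: the direction ``$\Rightarrow$'' is clear from the definition, and for ``$\Leftarrow$'' one notes that the edge inducing this split cuts off a subtree whose only leaves are $a$ and $b$, and in a binary tree such a subtree can only consist of a single inner vertex joined to $a$ and $b$. Since $\{a,b\}|X\setminus\{a,b\}\in\Sigma(T)=\Sigma(\widetilde T)$, the pair $[a,b]$ is therefore also a cherry of $\widetilde T$.

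Next I would pass to the trees $T'$ and $\widetilde T'$ on $X'=X\setminus\{b\}$ obtained from $T$ and $\widetilde T$ by deleting the leaf $b$ and suppressing the resulting degree-two vertex; these are again binary phylogenetic trees. The lemma to verify is that restriction to $X'$ carries $\Sigma(T)$ onto $\Sigma(T')$: namely, replacing each split $A|B\in\Sigma(T)$ by $(A\cap X')|(B\cap X')$ and discarding the one degenerate case in which a block becomes empty (which occurs exactly for the trivial split $\{b\}|X\setminus\{b\}$) yields precisely $\Sigma(T')$. This is a bookkeeping argument about which edges of $T$ survive the leaf deletion and suppression; in particular the edge carrying $\{a,b\}|X\setminus\{a,b\}$ becomes the edge carrying the trivial split at $a$. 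Since the same recipe is applied to $T$ and to $\widetilde T$ and $\Sigma(T)=\Sigma(\widetilde T)$, we get $\Sigma(T')=\Sigma(\widetilde T')$, hence $T'=\widetilde T'$ by the induction hypothesis. Finally, because $[a,b]$ is a cherry of both $T$ and $\widetilde T$, each of them is recovered from $T'=\widetilde T'$ by subdividing the pendant edge at $a$ and attaching $b$ at the new vertex; therefore $T=\widetilde T$.

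The step I expect to be the main obstacle is the restriction lemma $\Sigma(T')=\{(A\cap X')|(B\cap X') : A|B\in\Sigma(T)\}$ up to the degenerate block, since it needs some care about how deleting a leaf and suppressing a vertex acts on the edge set and hence on the induced splits. An alternative route that sidesteps the induction is the constructive Buneman approach: show directly that from any pairwise-compatible split system one can read off a tree (via the pairwise compatibility graph, or the Buneman graph construction), so that $\Sigma(T)$ already determines $T$; but I would favor the cherry-reduction induction above as the shorter self-contained argument.
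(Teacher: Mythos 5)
Your argument is correct. The chain of steps checks out: the cherry criterion (a pair $\{a,b\}$ is a cherry if and only if $\{a,b\}\,|\,X\setminus\{a,b\}$ lies in the split set) is proved correctly in both directions, the restriction lemma for leaf deletion is true as stated (the only degenerate restriction comes from the trivial split at $b$, and although two edges of $T$ -- the pendant edge at $a$ and the edge carrying $\{a,b\}\,|\,X\setminus\{a,b\}$ -- restrict to the same split of $T'$, this is harmless because the claim is about split \emph{sets}), and the recovery of $T$ from $T'$ by subdividing the pendant edge at $a$ is deterministic, so $T'=\widetilde T'$ does force $T=\widetilde T$. The only cosmetic omissions are the trivial cases $n\le 2$ (where the tree is unique anyway) and a proof of the standard fact that a binary tree with at least three leaves has a cherry; neither is a real gap. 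Note, however, that the paper does not prove this statement at all: it is quoted as the classical Buneman theorem with citations to Buneman (1971) and Semple--Steel (2003), where it is usually obtained as the uniqueness half of the Split Equivalence Theorem, i.e.\ via pairwise compatibility of splits or the constructive Buneman/tree-popping argument, which yields the stronger statement that \emph{every} pairwise compatible split system containing the trivial splits is realized by a unique tree. Your cherry-deletion induction buys a shorter, self-contained proof of exactly the uniqueness statement the paper needs, at the price of not giving the existence/characterization part that the compatibility approach delivers; either route is a legitimate way to justify the theorem as used here.
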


\begin{corollary} \label{bunemanAlignment} Let $T$ and  $\widetilde{T}$  be two binary phylogenetic $X$-trees. Then,
$T=\widetilde{T}$  if and only if $A_1(T)=A_1(\widetilde{T})$.
\end{corollary}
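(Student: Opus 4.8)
The plan is to reduce the statement to Buneman's theorem (Theorem \ref{buneman}) by exhibiting a canonical bijection between $A_1(T)$ and $\Sigma(T)$. The forward implication is immediate: if $T=\widetilde{T}$, then the sets of characters with parsimony score $1$ trivially coincide. So all the content is in the converse, and for that it suffices to show that $A_1(T)$ and $\Sigma(T)$ mutually determine one another.

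First I would characterize which binary characters have parsimony score exactly $1$, namely: $l(f,T)=1$ if and only if the bipartition of $X$ into the $a$-leaves and the $b$-leaves of $f$ is precisely some split $\sigma_e\in\Sigma(T)$. For the ``if'' direction, if $f$ induces $\sigma_e$, then the extension $g$ colouring the two components of $T-e$ by $a$ and $b$ respectively agrees with $f$ on $X$ and has changing number $1$, so $l(f,T)\le 1$; since both blocks of a split are non-empty, $f$ is non-constant and hence $l(f,T)\ge 1$, giving equality. For the ``only if'' direction, take a minimal extension $g$ of $f$; it has exactly one edge $e=\{u,v\}$ with $g(u)\neq g(v)$, and removing $e$ separates the component on which $g\equiv a$ from the one on which $g\equiv b$, so restricting to leaves shows $f$ induces $\sigma_e$. (Alternatively this is immediate from Menger's theorem, Proposition \ref{menger}: $l(f,T)=1$ precisely when a single edge of $T$ separates all $a$-leaves from all $b$-leaves.)

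Next, using the standing convention $f(1)=a$, I would observe that the map sending a split $\sigma\in\Sigma(T)$ to the unique character inducing $\sigma$ with $f(1)=a$ is a bijection onto $A_1(T)$: each split corresponds to exactly two characters and the normalization selects one of them, while by the previous paragraph every score-$1$ character arises in this way. (As a sanity check, $|A_1(T)|=2n-3=|\Sigma(T)|$ by Theorem \ref{thm:lengthAk} with $k=1$.) Consequently $A_1(T)=A_1(\widetilde{T})$ if and only if $\Sigma(T)=\Sigma(\widetilde{T})$, and applying Buneman's theorem to the latter equality yields $T=\widetilde{T}$. I do not anticipate a real obstacle here; the only point needing care is this bijection — in particular remembering that the normalization $f(1)=a$ is exactly what collapses the factor-of-two ambiguity, so that $A_1(T)$ and $\Sigma(T)$ are genuinely in one-to-one correspondence rather than merely $2$-to-$1$.
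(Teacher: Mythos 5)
Your proposal is correct and follows the same route as the paper, which simply invokes the 1:1 correspondence between $\Sigma(T)$ and $A_1(T)$ together with Buneman's theorem; you merely spell out the details of that bijection (the characterization of score-$1$ characters and the role of the normalization $f(1)=a$) that the paper leaves implicit.
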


The correctness of Corollary \ref{bunemanAlignment} follows directly from the 1:1 relationship between $\Sigma(T)$ and $A_1(T)$.

Last, we recall the following result from \cite{Fischer2019}, which extends Corollary \ref{bunemanAlignment} to the case $k=2$.

\begin{proposition}[adapted from Proposition 1 in \cite{Fischer2019}]\label{A2good}
Let $T$ and  $\widetilde{T}$  be two binary phylogenetic $X$-trees. Then,
$T=\widetilde{T}$  if and only if $A_2(T)=A_2(\widetilde{T})$.
\end{proposition}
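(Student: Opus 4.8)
The ``only if'' direction holds because $A_2(T)$ is by definition determined by $T$. For the converse I would show, by induction on $n=|X|$, that every binary phylogenetic $X$-tree is uniquely determined by the set $A_2(T)$ (viewed as a set of $X$-splits, using the bijection between binary characters $f$ with $f(1)=a$ and bipartitions of $X$). For $n\le 3$ there is only one such tree, so the base case is trivial. For $n\ge 4$, suppose $A_2(T)=A_2(\widetilde T)$. First I would recover the \emph{cherries} of $T$: for distinct $x,y\in X$, the pair $\{x,y\}$ is a cherry of $T$ if and only if the $X$-split $\{x,y\}\mid X\setminus\{x,y\}$ does \emph{not} lie in $A_2(T)$. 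To see this, note that assigning one state to $x,y$ and the other to all remaining leaves is an extension with exactly two changes, so this character has parsimony score at most $2$, hence (being non-constant, as $n\ge3$) score $1$ or $2$; and its score is $1$ precisely when $\{x,y\}\mid X\setminus\{x,y\}\in\Sigma(T)$, i.e.\ when some edge of $T$ has exactly $x$ and $y$ on one side, which in a binary tree means $x$ and $y$ share an inner node. (Proposition~\ref{menger} gives an alternative derivation of this dichotomy.) Since $A_2(T)=A_2(\widetilde T)$, the trees $T$ and $\widetilde T$ have the same cherries; in particular they share a cherry $[x,y]$, which we may take with $x\neq 1$.

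Next I would pass to smaller trees. Let $T'$ and $\widetilde T'$ be the binary phylogenetic trees on $X':=X\setminus\{x\}$ obtained from $T$ and $\widetilde T$ by deleting the leaf $x$ and suppressing the resulting degree-$2$ vertex. For an $X'$-split $f'$, let $\widehat{f'}$ be the $X$-split obtained by placing $x$ on the same side as $y$. The heart of the argument is the claim that $l(\widehat{f'},T)=l(f',T')$: a leaf that simply copies the state of its cherry partner does not affect the parsimony score. I would prove this by showing that every extension of $f'$ on $T'$ yields an extension of $\widehat{f'}$ on $T$ with the same changing number and vice versa --- only the three edges incident to the cherry's inner node are involved, and a short case analysis matches the change counts --- or, alternatively, via Proposition~\ref{menger} by rerouting through $y$ any edge-disjoint leaf-to-leaf path that uses the pendant edge of $x$. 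Granting this, $A_2(T')=\{\,g|_{X'}:g\in A_2(T),\ g(x)=g(y)\,\}$, and the same identity holds for $\widetilde T'$; since $A_2(T)=A_2(\widetilde T)$ and $[x,y]$ is a cherry of both, we conclude $A_2(T')=A_2(\widetilde T')$.

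By the induction hypothesis, $T'=\widetilde T'$. As $[x,y]$ is a cherry of both $T$ and $\widetilde T$, each of these trees is recovered from the common tree $T'=\widetilde T'$ in the same way --- subdivide the pendant edge at $y$ and attach $x$ to the new vertex --- so $T=\widetilde T$, completing the induction. I expect the parsimony-invariance $l(\widehat{f'},T)=l(f',T')$ to be the only genuinely substantive step (and it may well be a known fact about adding ``twin'' leaves); the cherry characterization and the reattachment are routine, and the base case is immediate. Note that this route does not invoke Theorem~\ref{buneman} directly, although one could alternatively phrase the first two steps as recovering $A_1(T)=\Sigma(T)$ from $A_2(T)$ and then appeal to Corollary~\ref{bunemanAlignment}.
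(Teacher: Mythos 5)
Your argument is correct, but note that the paper itself does not prove Proposition \ref{A2good} at all: it is imported by citation from Proposition 1 of \cite{Fischer2019}, so there is no in-paper proof to match yours against. What you give is a genuinely different, self-contained route from the strategy the paper uses for this family of results (and which Theorem \ref{characterization} exemplifies for $k\geq 3$): there, one takes a minimal split in $\Sigma(T)\setminus\Sigma(\widetilde{T})$, invokes Buneman's Theorem \ref{buneman}, and explicitly constructs a character lying in $A_k(\widetilde{T})\setminus A_k(T)$. You instead reconstruct $T$ recursively: cherries are read off from $A_2(T)$ (the dichotomy ``score $1$ iff $\{x,y\}\mid X\setminus\{x,y\}\in\Sigma(T)$, score $2$ otherwise'' is sound, since that character always admits an extension with two changes), then a shared cherry $[x,y]$ with $x\neq 1$ is deleted, the twin-leaf invariance $l(\widehat{f'},T)=l(f',T')$ gives $A_2(T')=\{g|_{X'}: g\in A_2(T),\ g(x)=g(y)\}$, and induction plus the unique reattachment of $x$ at $y$ finishes the proof. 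All steps check out, including the normalization issue ($1\in X'$ because $x\neq 1$) and the twin-leaf lemma, which is indeed a standard fact and is easily verified either by the three-edge case analysis or by rerouting paths as in Proposition \ref{menger}. What your approach buys is elementarity and independence from Theorem \ref{buneman} and from the counting formula of Theorem \ref{thm:lengthAk}; what it does not give is a template for $k\geq 3$, since the cherry test is specific to $k=2$ and, indeed, the statement fails for $k\geq 3$ when $n=2k$, so any general-$k$ argument must (as the paper's does) exploit a lower bound on $n$ rather than plain induction on leaf deletion.
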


\section{Results}

We are now in a position to state the main result of the present manuscript.

\begin{theorem}\label{characterization}
Let $k \in \mathbb{N}_{\geq 1}$ and let $n \in \mathbb{N}$ such that $n\geq 20k$. Let $T$ and  $\widetilde{T}$  be two binary phylogenetic $X$-trees with $|X|=n$. Then,
$T=\widetilde{T}$  if and only if $A_k(T)=A_k(\widetilde{T})$.
\end{theorem}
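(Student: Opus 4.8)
The natural strategy is to show that the split system $\Sigma(T)$ can be recovered from the alignment $A_k(T)$, since then the result follows immediately from the Buneman theorem (Theorem \ref{buneman}). The reduction to splits is attractive because a character $f$ has parsimony score $1$ on $T$ precisely when it corresponds to a split in $\Sigma(T)$, so what we really need is a way to ``read off'' the weight-$1$ characters from the weight-$k$ characters. The obstacle is that $A_k(T)$ for $k\geq 2$ does not obviously contain $A_1(T)$, so the recovery has to be indirect.

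First I would set up a \emph{local-to-global} argument. Given an edge $e$ of $T$ with induced split $\sigma_e = A\mid B$, I want to manufacture, from characters in $A_k(T)$, enough information to certify that $A\mid B \in \Sigma(T)$. The plan is: pick a character $f$ that is constant equal to $a$ on all of $A$ except it ``spends'' its extra $k-1$ changes deep inside the $B$-side (or vice versa), chosen so that $l(f,T)=k$ exactly — this is where Theorem \ref{thm:lengthAk} and Menger's theorem (Proposition \ref{menger}) do the bookkeeping, counting edge-disjoint $a$--$b$ paths to confirm the score is exactly $k$ and not less. By varying the ``perturbation'' part over a family of such characters and intersecting the induced bipartitions appropriately, one isolates the cut across $e$ itself. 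Crucially, because $n \geq 20k$, on at least one side of $e$ there is enough room (at least $\sim 10k$ leaves) to place the auxiliary changes far from $e$ without them interfering, which is what the large-$n$ hypothesis buys us; the counterexamples for $n=2k$ fail precisely because there is no such room.

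Concretely, I would proceed by contradiction: suppose $A_k(T) = A_k(\widetilde T)$ but $T \neq \widetilde T$. By Buneman there is a split $\sigma = A\mid B \in \Sigma(T) \setminus \Sigma(\widetilde T)$; without loss of generality take $\sigma$ to be a \emph{minimal} such split, so $|A| = |\sigma|$ is as small as possible, and also consider the structure of the maximum pending subtrees hanging off the edge $e$ inducing $\sigma$. Then I would construct an explicit character $f$ with $l(f,T) = k$ whose state pattern forces $l(f,\widetilde T) \neq k$ — for instance, $f$ assigns $b$ to the $|A|$ leaves in $A$ together with $k - |A|$ carefully chosen additional leaves on the $B$-side that lie in a common cherry-rich region, so that on $T$ the $b$-leaves split into $k$ ``clusters'' each reachable by one change, giving score exactly $k$ via Menger, whereas on $\widetilde T$, because $A\mid B$ is not a split, those same $b$-leaves cannot be separated from the $a$-leaves by only $k$ edge-disjoint paths. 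Balancing the two sides (ensuring $|A| \leq k$, or otherwise handling large splits by a dual construction on $B$) is where the constant $20$ in $n \geq 20k$ gets pinned down.

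The main obstacle I anticipate is the case analysis around \emph{where} the auxiliary changes can be placed: one must guarantee that the constructed character genuinely has parsimony score exactly $k$ on $T$ (not smaller), which requires the $k$ monochromatic blocks to be ``independent'' in the tree — separated by enough distance that no single edge-deletion merges two of them — and simultaneously that the differing split on $\widetilde T$ provably inflates the score. Handling small splits (where $|A|$ is close to $k$, leaving little freedom) versus large, balanced splits will likely need separate arguments, and verifying that $n \geq 20k$ suffices uniformly across all these configurations — rather than, say, $n \geq ck$ for a larger constant $c$ — is the delicate quantitative heart of the proof, presumably carried out by a careful counting of available cherries and pending subtrees combined with the exact formula in Theorem \ref{thm:lengthAk}.
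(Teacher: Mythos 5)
There is a genuine gap, and it sits exactly at the step you label as the heart of the matter. Your skeleton (contradiction, Buneman, a minimal split $\sigma=A|B\in\Sigma(T)\setminus\Sigma(\widetilde T)$, a distinguishing character certified via Menger, with $n\geq 20k$ providing room) matches the paper's, but your construction is both miscounted and unproven where it matters. First, the count: if you label all of $A$ with state $b$, then since $A$ is a clade of $T$ (the split comes from an edge of $T$), the whole block $A$ costs only \emph{one} change on $T$, so adding $k-|A|$ further $b$-leaves gives score $1+(k-|A|)$, not $k$; you would need $k-1$ additional independent $b$-leaves, and the worry ``$|A|\leq k$ versus a dual construction'' is beside the point. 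Second, and more seriously, the claim that on $\widetilde T$ ``those same $b$-leaves cannot be separated from the $a$-leaves by only $k$ edge-disjoint paths'' is asserted, not argued. To rule out $l(f,\widetilde T)=k$ you must \emph{exhibit} $k+1$ edge-disjoint $a$--$b$ paths in $\widetilde T$ (or otherwise bound the score away from $k$), and nothing in ``$A|B\notin\Sigma(\widetilde T)$'' guarantees this: the extra $b$-leaves, chosen by looking at $T$, may sit adjacent to one another or to $A$ inside $\widetilde T$ and absorb changes, so the score on $\widetilde T$ can collapse back to $k$ or below. You never use the minimality of $\sigma$, which is precisely the tool that gives structural control over $\widetilde T$.

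The paper resolves this by reversing the roles of the two trees. Minimality of $\sigma$ forces the two maximal pending subtrees $T_{A_1},T_{A_2}$ of $T_A$ to be subtrees of $\widetilde T$ as well, only not attached at a common node; this partial knowledge of $\widetilde T$'s shape lets one build a character (all of $A$ in state $a$, two short root-to-nearest-leaf paths into the intervening subtrees of $\widetilde T$, plus $k-2$ edge-disjoint leaf-to-leaf paths whose existence is guaranteed by a counting lemma derived from the formula for $|A_k(T)|$, using $n\geq 20k$) whose score on $\widetilde T$ is \emph{exactly} $k$ by Menger in one direction and an ``only $k$ leaves in state $a$'' argument in the other. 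The inequality on the other tree is then the easy one: on $T$, after collapsing $T_{A_1}$ and $T_{A_2}$, the two collapsed leaves form a cherry, so one change is saved and $l(f,T)<k$. In short, proving ``score strictly less than $k$'' on the tree whose relevant structure you fully control is tractable, whereas your plan requires ``score strictly greater than $k$'' on the tree you do not control; bridging that would require an additional argument (e.g.\ choosing the auxiliary leaves independent in \emph{both} trees and producing $k+1$ edge-disjoint paths in $\widetilde T$), which your proposal does not supply.
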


Before we can prove Theorem \ref{characterization}, we need to state one more lemma.

\begin{lemma}\label{pathsexistence} Let $T$ be a phylogenetic $X$-tree with $|X|=n$. Then, $T$ has a set of  $\left\lfloor \frac{n}{2}\right\rfloor$ edge-disjoint leaf-to-leaf paths.
\end{lemma}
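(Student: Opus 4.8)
The plan is to prove this by induction on $n$, the number of leaves, using the maximum pending subtree decomposition introduced in the preliminaries. The base cases $n=1$ (no paths needed, $\lfloor 1/2\rfloor = 0$) and $n=2$ (the tree is a single edge joining two leaves, giving one path, and $\lfloor 2/2\rfloor = 1$) are immediate. For the inductive step with $n\geq 3$, I would pick a cherry $[v,w]$ of $T$ — every binary tree with at least three leaves has one — and consider the path consisting of the two pendant edges $\{v,u\}$ and $\{w,u\}$, where $u$ is the inner node adjacent to both $v$ and $w$. This single leaf-to-leaf path uses up exactly the two leaves $v$ and $w$.

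Next I would delete $v$ and $w$ (and suppress the resulting degree-2 vertex $u$) to obtain a binary phylogenetic tree $T'$ on $n-2$ leaves. By the induction hypothesis, $T'$ has at least $\lfloor (n-2)/2\rfloor = \lfloor n/2\rfloor - 1$ edge-disjoint leaf-to-leaf paths. The key observation is that these paths in $T'$ can be lifted to edge-disjoint paths in $T$: each edge of $T'$ corresponds either to an edge of $T$ or to the concatenation of the two edges of $T$ incident to the suppressed vertex $u$, and this correspondence is injective on edge sets. Hence the lifted paths remain pairwise edge-disjoint in $T$, and none of them uses the edges $\{v,u\}$ or $\{w,u\}$ (since $v,w\notin T'$). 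Adding the cherry path $v - u - w$ therefore yields $\lfloor n/2\rfloor - 1 + 1 = \lfloor n/2\rfloor$ edge-disjoint leaf-to-leaf paths in $T$, completing the induction.

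The only point requiring a little care — and what I would regard as the main obstacle, though it is more bookkeeping than a genuine difficulty — is verifying that the lifting of paths through the suppressed vertex $u$ does not create edge overlaps: one must check that at most one path of $T'$ passes through $u$'s incident edges is false in general, but what is true is that distinct edge-disjoint paths in $T'$ remain edge-disjoint after the suppression is reversed, because the edge set of $T$ is obtained from that of $T'$ by splitting a single edge into two and adding the two new pendant edges $\{v,u\},\{w,u\}$, an operation that preserves disjointness of edge subsets. An alternative, perhaps cleaner, route would be to bypass induction entirely and invoke a direct matching argument: repeatedly extract cherries, but since removing a cherry can create new cherries, the inductive formulation above is the most transparent way to organize this.
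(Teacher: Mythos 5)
Your induction is correct in substance, but it is a genuinely different argument from the one in the paper. The paper obtains the lemma indirectly: it evaluates the counting formula of Theorem \ref{thm:lengthAk} at $k=\left\lfloor \frac{n}{2}\right\rfloor$, observes that $|A_{\lfloor n/2\rfloor}(T)|\geq 1$ in both the even and odd case, and then invokes the Menger-type Proposition \ref{menger} to convert a single character of parsimony score $\left\lfloor \frac{n}{2}\right\rfloor$ into $\left\lfloor \frac{n}{2}\right\rfloor$ edge-disjoint leaf-to-leaf paths. Your cherry-removal induction is more elementary and self-contained -- it uses neither the character count nor Menger's theorem -- and it actually constructs the paths explicitly, whereas the paper's route is shorter given the machinery it already has in place and stays entirely within the parsimony framework used in the rest of the argument. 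One small correction to your bookkeeping: when you delete both leaves $v$ and $w$ of the cherry, the inner node $u$ drops to degree $1$, not degree $2$; the standard restriction deletes $u$ as well and then suppresses $u$'s former neighbour (if that neighbour is an inner vertex), which is the vertex that actually becomes degree $2$. With this fix your lifting argument goes through verbatim: every edge of $T'$ corresponds either to an edge of $T$ or to the two edges of $T$ through the suppressed vertex, these edge sets are pairwise disjoint and avoid all three edges incident with $u$, so edge-disjointness is preserved, and adding the path from $v$ to $w$ through $u$ yields the required $\left\lfloor \frac{n}{2}\right\rfloor$ paths.
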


\begin{proof} Note that if $n<2$, then there is nothing to show. If $n\geq 2$, we apply Theorem \ref{thm:lengthAk} to $k=\left\lfloor \frac{n}{2}\right\rfloor$. For the case where $n$ is even, this leads to: 

$$|A_k(T)|=|A_{\frac{n}{2}}(T)|=\frac{2n-3\frac{n}{2}}{\frac{n}{2}}\binom{n-\frac{n}{2}-1}{ \frac{n}{2}-1} \cdot 2^{\frac{n}{2}-1}=2^{\frac{n}{2}-1}\geq 1.$$

Similarly, if $n$ is odd, we have $n\geq 3$ and $k=\left\lfloor \frac{n}{2}\right\rfloor = \frac{n-1}{2}$ and therefore get 

$$|A_k(T)|=|A_{\frac{n-1}{2}}(T)|=\frac{2n-3\frac{n-1}{2}}{\frac{n-1}{2}} \binom{n-\frac{n-1}{2}-1}{ \frac{n-1}{2}-1} \cdot 2^{\frac{n-1}{2}-1}=\frac{n+3}{n-1}\cdot \frac{n-1}{2} \cdot 2^{\frac{n-1}{2}-1} \geq 1.$$

So in both cases, there is at least one character $f$ on $T$ with parsimony score $k=\left\lfloor \frac{n}{2}\right\rfloor$. However, by Proposition \ref{menger}, this immediately implies that there is also at least one choice of $\left\lfloor \frac{n}{2}\right\rfloor$ edge-disjoint leaf-to-leaf paths in $T$. This completes the proof.
\end{proof}

We are now finally in a position to prove Theorem \ref{characterization}, which is the main result of the present note.

\begin{proof}[Proof of Theorem \ref{characterization}] Note that the cases $k=1$ and $k=2$ are already proved by Corollary \ref{bunemanAlignment} and Proposition \ref{A2good}. Therefore, we may assume in the following that $k\geq 3$.

Now, let $T\neq \widetilde{T}$ be two phylogenetic $X$-trees with $n$ leaves such that $n \geq 20k$. Then $\Sigma(T)\neq \Sigma(\widetilde{T})$ by Theorem \ref{buneman}, and as explained above we have $|\Sigma(T)|=|\Sigma(\widetilde{T})|=2n-3$. Together, this implies that $\Sigma(T)\setminus \Sigma(\widetilde{T}) \neq \emptyset$. Let $\sigma=A|B \in \Sigma(T)\setminus \Sigma(\widetilde{T})$ be minimal, i.e. of minimal size. Without loss of generality, we assume $|A|=|\sigma|$, i.e. $|A|\leq |B|$ and thus $|B|\geq \frac{n}{2}$. Note that $|A|\geq 2$ as $\sigma \in \Sigma(T)$ but $\sigma\not\in \Sigma(\widetilde{T})$ (otherwise, $\sigma$ would be contained in both split sets as all $X$-trees contain edges leading to each of the leaves in $X$). Moreover, $\sigma$ divides $T$ into two subtrees $T_A$ with leaf set $A$ and $T_B$ with leaf set $B$. In the following, we denote by $T_{A_1}$ and $T_{A_2}$ the two maximal pending subtrees of $T_A$, which must exist as $|A|\geq 2$, cf. Figure \ref{treedecomp}. The taxon sets of $T_{A_1}$ and $T_{A_2}$ are denoted by $A_1$ and $A_2$, respectively. Note that $|A_1|<|A|$ and $|A_2|<|A|$, and also note that $\Sigma(T)$ must contain the two $X$-splits $\sigma_1=A_1|X\setminus A_1$ and $\sigma_2=A_2|X\setminus A_2$ as $T_{A_1}$ and $T_{A_2}$ are subtrees of $T$.

\begin{figure} 
\center
\scalebox{.17}{\includegraphics{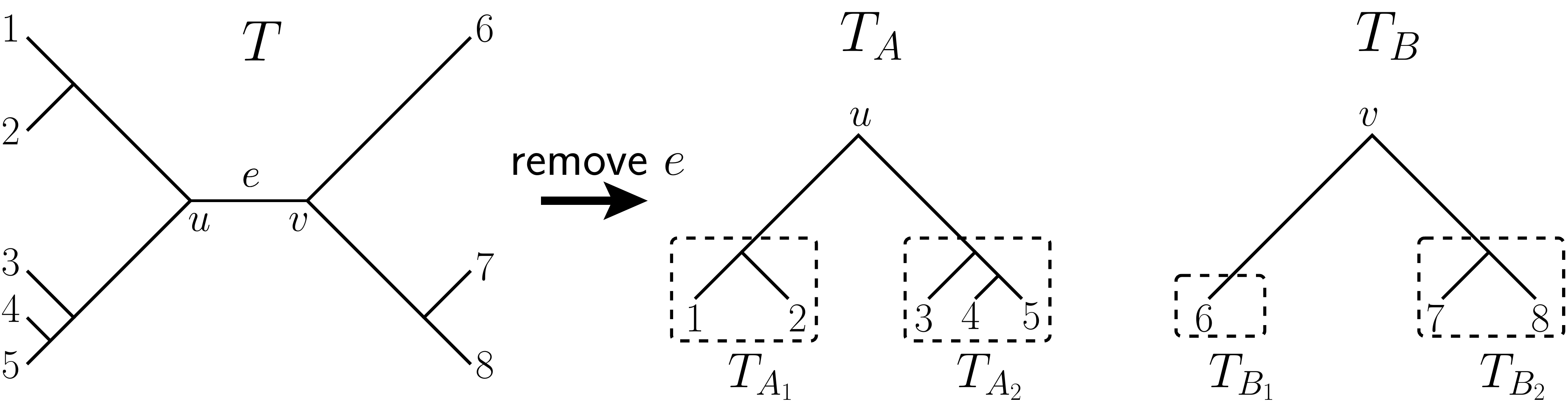}}

\caption{  \scriptsize (taken from \cite{Fischer2019}) By removing an edge $e$ from an unrooted phylogenetic tree $T$, it is decomposed into two rooted subtrees, $T_A$ and $T_B$. If, as in this figure, both of them consist of more than one node, then we can further decompose them into their two maximal pending subtrees, $T_{A_1}$ and $T_{A_2}$ or $T_{B_1}$ and $T_{B_2}$, respectively. }
\label{treedecomp}
\end{figure}

Note that by the minimality of  $\sigma$, the $X$-splits $\sigma_1$ and $\sigma_2$ must be contained in $\Sigma(T)\cap\Sigma(\widetilde{T})$. 
So we have $\sigma_1$, $\sigma_2$ $\in \Sigma(\widetilde{T})$. By an analogous argument, $\widetilde{T}$ must also contain all splits induced by edges of $T_{A_1}$ and $T_{A_2}$, respectively. So in fact, $\widetilde{T}$ has $T_{A_1}$ and $T_{A_2}$ as subtrees, but as $\sigma \not\in \Sigma(T)\cap\Sigma(\widetilde{T})$, they are not pending on the same inner node of  $\widetilde{T}$. In particular, this implies that $\widetilde{T}$ can be represented as depicted in Figure \ref{tildeTfig}. Moreover, as $\widetilde{T}$ does not contain split $\sigma$, it must contain subtrees $\widetilde{T}_{B_1},\ldots , \widetilde{T}_{B_m}$, where $m\in \mathbb{N}_{\geq 2}$, along the path from the root $\rho_1$ of $T_{A_1}$ to the root $\rho_2$ of $T_{A_2}$ (note that all these subtrees contain at least one leaf). Without loss of generality, we assume that $\widetilde{T}_{B_1}$ is pending on the same node as $T_{A_1}$ and $\widetilde{T}_{B_m}$ is pending on the same node as $T_{A_2}$, cf. Figure \ref{tildeTfig}. In the following, we denote by $B_i$ the set of leaves of subtree $\widetilde{T}_{B_i}$ for all $i=1,\ldots,m$.

\begin{figure} 
\center

\scalebox{.5}{\includegraphics{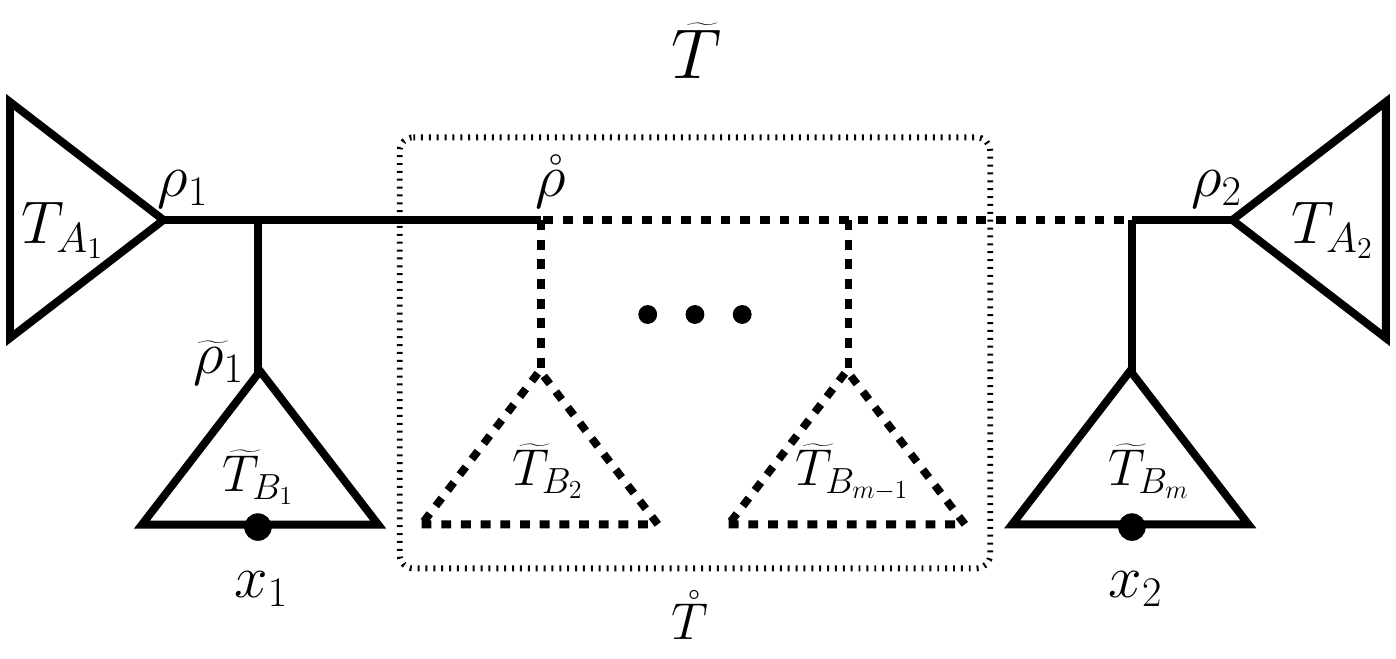}}
\caption{  \scriptsize  Tree $\widetilde{T}$ as described in the proof of Theorem \ref{characterization}. Note that the subtree $\mathring{T}$ might be empty, namely if $m=2$. If $m=3$, the proof considers $\widetilde{T}_{B_2}$, but if $m>3$, we can think of $\mathring{T}$ as a rooted binary tree with root $\mathring{\rho}$ and with $\widetilde{b}$ leaves as described in the proof. }
\label{tildeTfig}
\end{figure}

Now we construct a binary character $f$ such $l(f,\widetilde{T})=k>l(f,T)$ as follows: \begin{itemize}
\item We assign state $a$ to all taxa in $A$. 
\item Then we choose two paths: One connecting the root $\rho_1$ of $T_{A_1}$ with the nearest leaf $x_1\in \widetilde{T}_{B_1}$, and the other one connecting the root $\rho_2$ of $T_{A_2}$ with the nearest leaf $x_2 \in \widetilde{T}_{B_m}$.
\item Then we choose $k-2$ additional edge-disjoint leaf-to-leaf paths in $\widetilde{T}$, where all leaves considered are contained in $B\setminus\{x_1,x_2\}$.
\item We then assign $x_1$ and $x_2$ state $b$. For all other $k-2$ paths, we label one of its endpoints $a$ and the other one $b$. All unlabeled leaves in $B$ that are not contained in a path (if such leaves exist) also are assigned state $b$.
\end{itemize}

We subsequently prove  $l(f,\widetilde{T})=k>l(f,T)$, but  before we do this, we first show that our construction is valid, i.e. that indeed $k-2$ paths can be chosen as described above in the crucial third step.

Let $p_1$ denote the path length  of the path from $\rho_1$ to $x_1$ in $\widetilde{T}$ and similarly, let $p_2$ denote the path length of the path from $\rho_2$ to $x_2$ in $\widetilde{T}$. Note that these two paths are edge-disjoint. 

Now consider $\widetilde{T}_{B_1}$ with its $|B_1|$ leaves and root $\widetilde{\rho}_1$: Before taking the path from $\widetilde{\rho}_1$ to $x_1$ (as a subpath of the path from $\rho_1$ to $x_1$), by Lemma \ref{pathsexistence}, there were $\left\lfloor \frac{|B_1|}{2}\right\rfloor$ edge-disjoint leaf-to-leaf paths in $\widetilde{T}_{B_1}$. Even without counting the number of such paths requiring edges of the already taken path, one can easily see that this path can at most reduce the number of edge-disjoint leaf-to-leaf paths in $\widetilde{T}_{B_1}$ by $p_1-2$ (as it contains $p_1-2$ edges in $\widetilde{T}_{B_1}$). 
 So in total, subtree $\widetilde{T}_{B_1}$ allows for at least $\left\lfloor \frac{|B_1|}{2}\right\rfloor - p_1+2$ edge-disjoint leaf-to-leaf paths. By the same argument, $\widetilde{T}_{B_m}$ allows for at least $\left\lfloor \frac{|B_m|}{2}\right\rfloor - p_2+2$ edge-disjoint leaf-to-leaf paths.

We now consider the leaves in $B\setminus (B_1\cup B_m)$. Let $\widetilde{b}$ denote the cardinality of this set. Note that $\widetilde{b}$ can equal 0 (namely if $m=2$). On the other hand, if $\widetilde{b}=1$ (i.e. if $m=3$), we can consider $\widetilde{T}_{B_2}$, and otherwise, if $\widetilde{b}\geq 2$ (i.e. $m\geq 4$) we can consider the rooted tree $\mathring{T}$ containing subtrees $\widetilde{T}_{B_2},\ldots, \widetilde{T}_{B_{m-1}}$ as depicted in the dashed box of Figure \ref{tildeTfig}. Again by Lemma \ref{pathsexistence}, there are in all cases at least $\left\lfloor \frac{\widetilde{b}}{2}\right\rfloor$ edge-disjoint leaf-to-leaf paths induced by $B\setminus (B_1\cup B_m)$ in the tree under consideration (i.e. the empty tree, $\widetilde{T}_{B_2}$ or $\mathring{T}$, respectively). Note that each such path naturally corresponds to a path in $\widetilde{T}$ which is edge-disjoint with all other paths chosen before.

So the total number $\mathcal{P}$ of edge-disjoint leaf-to-leaf paths that are present in in $B$ which also do not intersect with the paths from $\rho_1$ to $x_1$ and $\rho_2$ to $x_2$, respectively, is bounded as follows: 

\begin{equation*} \mathcal{P} \geq \left\lfloor \frac{|B_1|}{2}\right\rfloor - (p_1-2) + 
\left\lfloor \frac{|B_m|}{2}\right\rfloor - (p_2-2)+ \left\lfloor \frac{\widetilde{b}}{2}\right\rfloor \geq \left( \frac{|B_1|-1}{2}\right) + \left( \frac{|B_m|-1}{2}\right) + \left(\frac{\widetilde{b}-1}{2}\right) -(p_1-2)-(p_2-2).
\end{equation*}

Note that by the choice of $x_1$ the height of $\widetilde{T}_{B_1}$ is at least $p_1-2$ and that in fact \emph{all} leaves of $\widetilde{T}_{B_1}$ have at least this distance to the root  $\widetilde{\rho}_1$ of $\widetilde{T}_{B_1}$ (as the distance from $x_1$ to $\widetilde{\rho}_1$ and to $\rho_1$ is minimal). So this leads to $|B_1| \geq 2^{p_1-2}$ and thus $p_1-2 \leq \log_2|B_1|$. Analogously, we derive $p_2-2 \leq \log_2|B_m|$. Using this in the above inequality leads to: 

\begin{equation*}\label{Pbound2} \mathcal{P} \geq \left( \frac{|B_1|-1}{2}\right) + \left( \frac{|B_m|-1}{2}\right) + \left(\frac{\widetilde{b}-1}{2}\right) -\log_2|B_1|- \log_2|B_m|= \frac{|B|}{2} - \log_2|B_1|- \log_2|B_m| -1.5,
\end{equation*}

where the last equation is due to $|B_1|+|B_m|+\widetilde{b}=|B|$. Next we use the fact that $B_1$ and $B_m$ are both proper subsets of $B$ and thus we have $\log_2|B_1| \leq \log_2|B|$ as well as $\log_2|B_m| \leq \log_2|B|$. This leads to: 

\begin{equation*}\label{Pbound3} \mathcal{P} \geq  \frac{|B|}{2} -2 \log_2|B|-1.5.
\end{equation*}

As stated above, we wish to select $k-2$ edge-disjoint leaf-to-leaf paths in $B\setminus\{x_1,x_2\}$, and we have shown that at least $\frac{|B|}{2} -2 \log_2|B|-1.5$ such paths. So it remains to show that this number is at least $k-2$ for $k\geq 3$ and  $n\geq 20k$.

In this regard, we now set $g(y):=\frac{1}{2}y-2 \log_2(y) -1.5$ and analyze this function. Using the first derivative $g'(y)=\frac{1}{2}-\frac{2}{y \ln (2)}$, which equals 0 if and only if $y= \frac{4}{\ln (2)} \approx 5.77$, as well as the second derivate $g''(y)=\frac{2}{y^2 \ln (2)}>0 \ \ \forall y \neq 0$, it can be easily seen that $g$ has a local minimum at $y \approx 5.77$ and that for all values of $y$ larger than this minimum, $g$ is strictly monotonically increasing. Additionally, for $y=30$ we have $g(y)\approx 3.686>3= \frac{y}{10}$, so by the monotonicity of $g$ and as $g'(y)>\frac{1}{10}=\left(\frac{y}{10}\right)'$ for all $y\geq 8$, we can conclude for all $|B|\geq 30$:

\begin{equation*}\label{Pbound4} \mathcal{P} \geq  \frac{|B|}{2} -2 \log_2|B|-1.5 > \frac{|B|}{10} \geq \frac{n}{20}\geq  \frac{20k}{20}=k>k-2 \ \ \forall k\geq 3, \ n\geq 20k. 
\end{equation*}

Note that the third inequality is due to $|B|\geq \frac{n}{2}$.  Moreover, note that as $k \geq 3$, we have $n \geq 20k \geq  60$, which guarantees that $|B|\geq 30$, so that the latter requirement is no restriction. Thus, it is indeed possible to choose $k-2$ edge-disjoint leaf-to-leaf paths in $B$, additional to the two paths from $\rho_1$ to $x_1$ and $\rho_2$ to $x_2$, respectively. 

Recall that we labeled all taxa in $A$ with $a$, and $x_1$ and $x_2$ were labeled $b$. Moreover, we now have chosen $k-2$ edge-disjoint leaf-to-leaf paths, and for each such path, we label one of its leaves $a$ and the other one $b$. All other taxa not covered by such paths (if they exist) are labeled with $b$. We call the resulting character $f$. 

Next, we show that $l(f,\widetilde{T})=k$. Let us pick one leaf $a_1$ from $T_{A_1}$ and one leaf $a_2$ from $T_{A_2}$ and consider the paths from $a_1$ to $x_1$ and $a_2$ to $x_2$, respectively. Together with the $k-2$ paths chosen from $B$, this leads to $k$ edge-disjoint paths connecting leaves in state $a$ with leaves in state $b$. So, by Proposition \ref{menger}, we have $l(f,\widetilde{T})\geq k$.

On the other hand, as all leaves in the subtrees $T_{A_1}$ and $T_{A_2}$ are in state $a$, $f$ requires no substitutions in these subtrees. In fact, if we replace $T_{A_1}$ and $T_{A_2}$ by leaves $a_1$ and $a_2$ and call the resulting tree $\widehat{T}$ and the resulting restriction of $f$ on the remaining leaves $\widehat{f}$, we have $l(f,\widetilde{T})=l(\widehat{f},\widehat{T})$. However, as $\widehat{f}$ has precisely $k$ leaves in state $a$, the parsimony score of $\widehat{f}$ on any tree can be at most $k$ (a change might be needed on all pending edges leading to the $a$-leaves, but more changes are definitely not required). Therefore, we obtain 
$l(\widehat{f},\widehat{T})\leq k$ and thus also $l(f,\widetilde{T})\leq k$.

Altogether we conclude that $l(f,\widetilde{T})= k$ and therefore $f \in A_k(\widetilde{T})$.

Moreover, we now argue that $l(f,T)<k$ and that therefore $f \not\in A_k(T)$. In order to see this, we again replace the subtrees $T_{A_1}$ and $T_{A_2}$ by leaves $a_1$ and $a_2$, respectively, and we again obtain character $\widehat{f}$, this time in combination with the restriction of $T$ on the smaller leaf set, which we will call $T'$. Now as above, we have $l(f,T)=l(\widehat{f},T')$, and as there are only $k$ leaves labeled $a$, we conclude as above that $l(f,T)\leq k$. However, in $T'$, the leaves $a_1$ and $a_2$ form a cherry $[a_1,a_2]$, so no extension minimizing the changing number to give $l(\widehat{f},T')$ can require a change for both of these leaves. (If node $u$ incident to both $a_1$ and $a_2$ was in state $b$, there would be two changes on the cherry, but then it would be advantageous to change $u$ to $a$, as in this case, while there might be an additional change on the edge corresponding to $\sigma$, two changes, namely on the edges $\{u,a_1\}$ and $\{u,a_2\}$, could be saved.) Therefore, in total at most $k-1$ changes are needed, so in fact, we have $l(\widehat{f},T')<k$ and thus also $l(f,T)<k$, which shows that $f \not\in A_k(T)$.

In summary, we have found a character $f \in A_k(\widetilde{T})$, for which we know that $f \not\in A_k(T)$, which shows that $A_k(\widetilde{T})\neq A_k(T)$. This completes the proof.
\end{proof}

\section{Discussion and outlook}
The main result of this manuscript, namely that all binary phylogenetic trees with $n\geq 20k$ leaves are uniquely defined by their induced profiles $A_k(T)$, can be seen as an extension of Corollary \ref{bunemanAlignment}, which is a consequence of the classic Buneman theorem \ref{buneman} \cite{Buneman1971}, as well as Proposition \ref{A2good}. Based on these results as well as based on the computational considerations from \cite{pablo}, we conjecture that the factor of 20 in Theorem \ref{characterization} can be reduced. However, note that it was already shown in \cite{Fischer2019} that a factor of 2 is not sufficient whenever $k\geq 3$ and $n=2k$. In any case, studying the gap between 2 and 20 is an interesting area for further investigations.

Moreover, note that in \cite{Fischer2019}, the fact that $A_2(T)$ defines $T$ was only used as a first step (namely, a necessary pre-requisite) to show that $T$ can also be recovered when using maximum parsimony as a criterion for tree reconstruction. In particular, it was shown there that for all $n \geq 9$, $T$ is the so-called unique maximum parsimony tree for $A_2(T)$, i.e. $T=T'$, where $T'$ is the tree minimizing $l(A_2(T),T')$. It this regard, we consider the present manuscript as an important step to answer the same question for $A_k(T)$ whenever $n \geq 20k$: Can $T$ be uniquely recovered from $A_k(T)$ when maximum parsimony is used for tree reconstruction? This question is mathematically intriguing, but also relevant for biologists, as maximum parsimony as a simple tree reconstruction criterion is often considered valid whenever the number of changes is relatively small (cf. e.g. \cite{Lin1991,Sourdis}). But of course whenever $A_k(T)$ does not uniquely define $T$, no tree reconstruction method will be able to uniquely recover $T$ from $A_k(T)$, which highlights the importance of characterizing cases when $A_k(T)$ indeed does characterize $T$. Theorem \ref{characterization} of the present manuscript represents an essential step in this regard.

\section*{Acknowledgements} I want to thank Mike Steel for helpful discussions on a previous version of this manuscript. Moreover, want to thank the German Academic Exchange Service DAAD for funding a conference trip to New Zealand in 2019, where I was first inspired to start working on this project. The rest of this project was completed as part of the joint research project \textit{\textbf{DIG-IT!}}, which is kindly supported by the European Social Fund (ESF), reference: ESF/14-BM-A55-0017/19, and the Ministry of Education, Science and Culture of Mecklenburg-Vorpommern, Germany. Last but not least, I want to thank two anonymous reviewers, whose valuable suggestions helped to improve this manuscript.

\section*{Conflict of interest} The author herewith certifies that she has no affiliations with or involvement in any
organization or entity with any financial (such as honoraria; educational grants; participation in speakers’ bureaus;
membership, employment, consultancies, stock ownership, or other equity interest; and expert testimony or patent-licensing
arrangements) or non-financial (such as personal or professional relationships, affiliations, knowledge or beliefs) interest in the subject matter discussed in this manuscript.

\section*{Data availability statement} 
Data sharing is not applicable to this article as no new data were created or analyzed in this study.

\bibliographystyle{plain}
\bibliography{References-NEW}   

\begin{thebibliography}{1}

\bibitem{Buneman1971}
Peter Buneman.
\newblock {\em The {R}ecovery of {T}rees from {M}easures of {D}issimilarity},
  pages 387--395.
\newblock Edinburgh University Press, 1971.

\bibitem{Fischer2019}
Mareike Fischer.
\newblock On the uniqueness of the maximum parsimony tree for data with up to
  two substitutions: An extension of the classic buneman theorem in
  phylogenetics.
\newblock {\em Molecular phylogenetics and evolution}, 137:127--137, 2019.

\bibitem{Fischer2021}
Mareike Fischer and Volkmar Liebscher.
\newblock On the balance of unrooted trees.
\newblock {\em J. Graph Algorithms Appl.}, 25:133--150, 2021.

\bibitem{pablo}
Pablo~A. Goloboff and Mark Wilkinson.
\newblock On defining a unique phylogenetic tree with homoplastic characters.
\newblock {\em Molecular Phylogenetics and Evolution}, 122:95 -- 101, 2018.

\bibitem{Lin1991}
J~Lin and Masatoshi Nei.
\newblock Relative efficiencies of the maximum-parsimony and distance-matrix
  methods of phylogeny construction for restriction data.
\newblock {\em Molecular biology and evolution}, 8 3:356--65, 1991.

\bibitem{Semple2003}
Charles Semple and Mike Steel.
\newblock {\em Phylogenetics ({O}xford {L}ecture {S}eries in {M}athematics and
  {I}ts {A}pplications)}.
\newblock Oxford University Press, 2003.

\bibitem{Sourdis}
John Sourdis and Masatoshi Nei.
\newblock Relative efficiencies of the maximum parsimony and distance-matrix
  methods in obtaining the correct phylogenetic tree.
\newblock {\em Molecular biology and evolution}, 5 3:298--311, 1988.

\bibitem{Book_Steel}
Mike Steel.
\newblock {\em Phylogeny: {D}iscrete and {R}andom {P}rocesses in {E}volution
  ({CBMS}-{NSF} {R}egional {C}onference {S}eries)}.
\newblock SIAM-Society for Industrial and Applied Mathematics, 2016.
\newblock {ISBN} 978-1-611974-47-8.

\end{thebibliography}

\end{document}